\newcommand\var{{\rm var}}
\newtheorem{thm}{Theorem}[section]
\newtheorem{cor}[thm]{Corollary}
\newtheorem{conj}[thm]{Conjecture}
\newtheorem{lemma}[thm]{Lemma}
\title{Discriminating Power of Centrality Measures\\ (working paper)}
\author{
M. Puck Rombach\footnotemark[1]
\and
Mason A. Porter\footnotemark[3]
}
\begin{document}

\maketitle

\renewcommand{\thefootnote}{\fnsymbol{footnote}}
\footnotetext[1]{Oxford Centre for Industrial and Applied Mathematics, Mathematical Institute, University of Oxford, {\tt rombach@maths.ox.ac.uk}}
\footnotetext[2]{Oxford Centre for Industrial and Applied Mathematics, Mathematical Institute and
CABDyN Complexity Centre, University of Oxford, {\tt porterm@maths.ox.ac.uk}}

\begin{abstract}

The calculation of \emph{centrality measures} is common practice in the study of networks, as they attempt to quantify the importance of individual vertices, edges, or other components.  Different centralities attempt to measure importance in different ways. In this paper, we examine a conjecture posed by E. Estrada regarding the ability of several measures to distinguish the vertices of networks. Estrada conjectured \cite{Est05,Est13} that if all vertices of a graph have the same subgraph centrality, then all vertices must also have the same degree, eigenvector, closeness, and betweenness centralities. We provide a counterexample for the latter two centrality measures and propose a revised conjecture.

\end{abstract}

\section{Introduction}

Many problems in the study of networks focus on flows along the edges of a network \cite{newman2010}. Although flows such as the transportation of physical objects or the spread of disease are conservative, the flow of information tends to diminish as it moves through a network, and the influence of one vertex on another decreases with increasing network distance between them \cite{lerman}. 

Such considerations have sparked an interest in network \emph{communicability}, in which a perturbation of one vertex is felt by the other vertices with differing intensities~\cite{Est08}. These intensities depend on all paths between a pair of vertices, though longer paths have smaller influence. This idea is expressed in a general communicability function
\begin{equation} \label{comm}
	c_i=\sum_{k=0}^\infty \sum_{j=1}^n c_k (A^k)_{ij}\,,
\end{equation}
where $A$ is the adjacency matrix---whose entries are 1 if vertices $i$ and $j$ are connected to each other and 0 if they are not---and $n$ is the total number of vertices in the network.  The networks that we consider in the present paper are unweighted, undirected graphs with no self-edges and no multi-edges. A special version of communicability is the Katz centrality \cite{Katz53}, for which $c_k=\alpha^k$ (also see Bonacich power centrality~\cite{Bon87}). Reference \cite{Est12comm} gives an extensive review of network communicability.

Note that one can compute the \emph{self-communicability} of vertices:
\begin{equation}\label{karma}
	c_{ii}=\sum_{k=0}^\infty c_k (A^k)_{ii}\,.
\end{equation}
From a sociological perspective, one can construe $c_{ii}$ as the ``karma" of vertex $i$, as it measures its ability to receive its own influence via the graph. 

\emph{Subgraph centrality}, which was introduced in Ref.~\cite{Est05}, is an example of a communicability measure and is defined as
\begin{equation}\label{sum}
	sc_i=\sum_{k=0}^{\infty} \frac{\left( A^k \right)_{ii}}{k!}\,.
\end{equation}
It is a special case of self-communicability in which the weighting function is $c_k=1/k!$ (which has computational advantages). The sum in (\ref{sum}) is bounded by $\sum_{k=0}^{\infty} \frac{ \lambda_1^k }{k!}=e^\lambda$, so it converges. Subgraph centrality can be computed from the spectrum of the adjacency matrix \cite{Est05}:
\begin{equation*}
	sc_i=\sum_{j=1}^n (v_j^i)^2 e^{\lambda_j}\,,
\end{equation*}	
where $\lambda_j$ is the $j$th eigenvalue of $A$ and $v_j^i$ is the $i$th component of its associated eigenvector. We will denote the vector containing all vertex subgraph centralities as $sc_G$ and use similar notation for other centrality measures.

\section{Centrality Measures and Their Power To Discriminate}

There exist many methods to measure the \emph{centrality}, or importance, of vertices or other components in a graph \cite{newman2010}.  We consider measures of vertex centrality. Different methods for measuring vertex centrality are appropriate for different types of networks, dynamical processes on networks, and applications. 

In this paper, we consider subgraph centrality and four of the traditional notions of centrality: degree, closeness, betweenness, and eigenvector centralities. The \emph{degree centrality} (i.e., degree) of vertex $i$ is
\begin{equation}
	d_i=\sum_{j=1}^n A_{ij} 
\end{equation}
and is simply the number of edges incident on the vertex.

The \emph{eigenvector centrality} of a vertex $i$ is the value of the $i${th} entry of the leading eigenvector of the adjacency matrix \cite{bonacich72}:
\begin{equation}
	ec_i = \frac{1}{\lambda_0} \sum_{j \in \Gamma(i)} ec_j\,.
\end{equation}	
One way to think of the eigenvector centrality of a vertex $i$ is as the sum over $k$ of the number of distinct paths of length $k$ that start at vertex $i$ normalized by the total number of paths of length $k$ in a graph.

\emph{Closeness centrality} has been defined in multiple ways in the literature, but it is always a function of the mean length of the shortest paths from a vertex to each of the other vertices in a graph \cite{Sabidussi1966,Silva08}. We use the definition~\cite{Est11,Free79}
\begin{equation}\label{close}
	cc_i=\frac{n-1}{\sum_{j \in V(G)} P(i,j)}\,,
\end{equation}
where $P(i,j)$ is the length of a shortest (i.e. geodesic) path between $i$ and $j$.  The definition (\ref{close}) is the same as the one used in Ref.~\cite{Est13}.

The geodesic \emph{betweenness centrality} of a vertex measures how often it lies on shortest paths between pairs of nodes. It is defined as \cite{newman2010}
\begin{equation}
	bc_i = \sum _{j,k \in V(G) \backslash i} \frac{P^*_{jk}(i)}{P^*_{jk}}\,,
\end{equation}
where $P^*_{jk}$ is the number of distinct shortest paths running between vertices $j$ and $k$, and $P^*_{jk}(i)$ is the number of such paths that include vertex $i$.

It was noted in Ref.~\cite{Est13} that centrality measures such as degree, closeness, betweenness, and eigenvector centralities are unable on certain graphs to distinguish between any of its vertices---even for graphs that are not \emph{vertex-transitive}. By definition, a graph is ``vertex-transitive" if its vertices are indistinguishable: in other words, for all pairs of vertices $i$ and $j$, there exists an automorphism $f: V(G) \to V(G)$ that maps $i$ to $j$.

In Fig.~\ref{frucht}, we show the Frucht graph, which is a cubic (i.e., 3-regular) graph on 12 vertices that has no nontrivial symmetries (i.e., it has no automorphisms other than the identity). This is the smallest regular graph \cite{Bar69} with this property. Because degree centrality and eigenvector centrality can be described purely in terms of the degrees of vertices and those of their neighbors, these two measures are unable to distinguish between the vertices of the Frucht graph. 

In Fig.~\ref{betfail}, we show a graph on 6 vertices that is clearly not vertex-transitive (as there exist vertices of degrees 3 and 4) and for which all vertices have the same value for betweenness centrality~\cite{Est13}.  Such graphs are called \emph{betweenness-self-centric} (or are said to have a \emph{betweenness centralization} equal to $0$), and Ref.~\cite{Gago12} includes several examples of such graphs.

In Fig.~\ref{closfail}, we show a graph on 8 vertices that is not vertex-transitive (three vertices form part of a triangle, but the other five do not) and for which closeness centrality is unable to distinguish the vertices from each another~\cite{Est13}. In this example, degree centrality and eigenvector centrality are also unable to distinguish vertices from each other.

\begin{center}
\begin{figure}[!htbp]
\begin{minipage}[c]{\linewidth}
\begin{center}
\includegraphics[scale=.65]{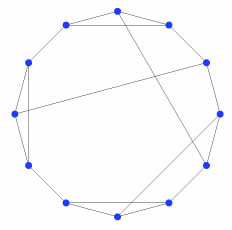}
\end{center}
\end{minipage}
\caption{The Frucht graph, which is an example of a asymmetric graph for which degree and eigenvector centrality are unable to distinguish the vertices from each other.
} 

\label{frucht}
\end{figure}
\end{center}

\begin{center}
\begin{figure}[!htbp]
\begin{minipage}[c]{\linewidth}
\begin{center}
\includegraphics[scale=.65]{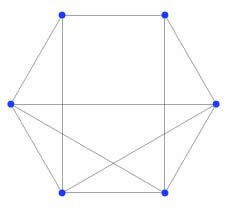}
\end{center}
\end{minipage}
\caption{A graph on 6 vertices that is not vertex-transitive and on which betweenness centrality is unable to distinguish the vertices from each other.}\label{betfail}
\end{figure}
\end{center}

\begin{center}
\begin{figure}[!htbp]
\begin{minipage}[c]{\linewidth}
\begin{center}
\includegraphics[scale=.65]{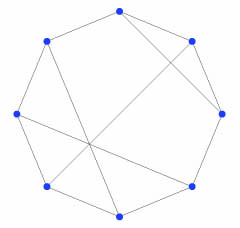}
\end{center}
\end{minipage}
\caption{A graph on 8 vertices that is not vertex-transitive and on which closeness, degree, and eigenvector centralities are unable to distinguish the vertices from each other.}\label{closfail}
\end{figure}
\end{center}

\section{Estrada's Conjectures}\label{est-section}

E. Estrada posed the following two conjectures concerning the discriminating power of subgraph centrality.
\begin{conj}\label{estconj1}~\cite{Est05}\quad
	$\var(sc_G)=0$ $\Rightarrow$ $\var(d_G)=\var(ec_G)=\var(cc_G)=\var(bc_G)=0$.
\end{conj}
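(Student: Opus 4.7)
The statement consists of four implications, which I would attack in order of increasing combinatorial difficulty, starting from the spectral consequences of the hypothesis. Throughout I write $sc_i = (e^A)_{ii} = \sum_{j=1}^n (v_j^i)^2 e^{\lambda_j} = \sum_{k \geq 0} (A^k)_{ii}/k!$ and let $c$ denote the common value of $sc_i$ under the assumption $\var(sc_G)=0$. The expression on the right is a power series in the closed-walk counts $(A^k)_{ii}$, all of which are nonnegative integers, with $(A^0)_{ii}=1$, $(A^1)_{ii}=0$, and $(A^2)_{ii}=d_i$; this is the algebraic bridge I would exploit.

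For the degree and eigenvector parts, I would first observe that these are morally equivalent: since the Perron eigenvector is proportional to $\mathbf{1}$ precisely when the graph is regular, $\var(ec_G)=0$ is equivalent to $\var(d_G)=0$, so both reduce to showing that $sc$ constant forces regularity. The plan is to establish, by induction on $k$, that each diagonal sequence $(A^k)_{ii}$ is itself independent of $i$. The base case $k=2$ is what we want. For the induction I would use the lower bound $(A^{2k})_{ii} \geq d_i^k$ (obtained by bouncing $k$ times between $i$ and a neighbor), which implies $sc_i \geq \cosh(\sqrt{d_i})$ and hence that higher-order terms in the Taylor expansion grow monotonically with $d_i$. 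Combined with the single constraint $\sum_k (A^k)_{ii}/k! = c$, the plan is to argue that a strict inequality $d_i > d_j$ produces a strict inequality $sc_i > sc_j$ by dominating each successive closed-walk count, contradicting constancy. Once regularity is in hand, $ec_i = 1/\sqrt{n}$ automatically, completing the first two implications.

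For closeness, the plan is to upgrade regularity to full distance-regularity, i.e., to show that the sequence $(|N_r(i)|)_{r \geq 1}$ of sphere sizes around each vertex is independent of $i$, where $N_r(i)$ is the set of vertices at graph distance exactly $r$ from $i$. The idea is to recover $|N_r(i)|$ inductively from $((A^k)_{ii})_{k \leq 2r}$ by an inclusion--exclusion that separates, at each walk length, the contribution of walks whose furthest point lies at distance exactly $r$ from contributions of walks confined to $N_{<r}(i)$. Once each $|N_r(i)|$ is shown to be independent of $i$, the identity $\sum_j P(i,j) = \sum_r r\,|N_r(i)|$ immediately gives $\var(cc_G)=0$. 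For betweenness, the plan is to iterate this scheme by considering, for each pair of radii $(r,s)$, the distribution of distances $P(j,k)$ among vertices $j \in N_r(i)$ and $k \in N_s(i)$; constancy of these joint distributions in $i$, together with an Eulerian-type counting of shortest paths, would then pin down $bc_i$ as a function only of the common distance-degree data.

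The hard part, and where I expect the argument to meet its main obstacle, is the recovery of the geodesic information $|N_r(i)|$ from the walk spectrum $((A^k)_{ii})_k$. Closed walks conflate geodesic and non-geodesic returns, and the factor $1/k!$ suppresses long walks exponentially, so the hypothesis is essentially a local statement while closeness and betweenness are global. The disentangling inclusion--exclusion has to be carried out without ever comparing two vertices whose balls differ in structure, and I expect the bookkeeping to require some genuinely new identity relating diagonal walk counts to sphere sizes; this is precisely the step at which any failure of the conjecture as stated would most plausibly surface.
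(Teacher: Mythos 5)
You are attempting to prove a statement that the paper \emph{disproves}: its entire treatment of Conjecture~\ref{estconj1} is a counterexample, not a proof. The Godsil--McKay graph of Fig.~\ref{wr-vt-dr} (12 vertices) is walk-regular, so every diagonal walk count $(A^k)_{ii}$, and hence every $sc_i$, is the same for all $i$; yet its betweenness vector takes the two values $7$ and $8$, and its closeness vector the two values $1/18$ and $1/19$. This graph directly destroys the central step of your plan: you propose to recover the sphere sizes $\vert N_r(i)\vert$ from the diagonal walk sequence $\bigl((A^k)_{ii}\bigr)_k$ by an inclusion--exclusion, but in the Godsil--McKay graph these sequences are \emph{identical} across vertices while the distance sums $\sum_r r\,\vert N_r(i)\vert$ (equivalently the closeness values) are not---so no identity of the kind you seek can exist. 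Your closing paragraph located the fragile step correctly; the resolution is that the conjecture genuinely fails there. Walk-regularity is strictly weaker than distance-regularity, and the paper's Lemmas~\ref{distbet} and~\ref{distclos} together with its Conjecture~\ref{newdiscconj} show that distance-regularity is the hypothesis under which your closeness and betweenness conclusions actually hold.

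The degree and eigenvector half of your plan is also gapped, although there the statement is not known to be false. Your claim that $d_i > d_j$ forces $sc_i > sc_j$ ``by dominating each successive closed-walk count'' is unjustified: $(A^k)_{ii}$ for $k \geq 3$ depends on the cycle structure through $i$, not merely on $d_i$, and a lower-degree vertex lying on many triangles and short cycles can have the larger count at every length $k \geq 3$. The bound $(A^{2k})_{ii} \geq d_i^k$ is one-sided and cannot rule out compensation across the series, so constancy of the single sum $\sum_k (A^k)_{ii}/k!$ does not visibly force $(A^2)_{ii}=d_i$ to be constant. Indeed, your proposed induction---that $\var(sc_G)=0$ forces each $(A^k)_{ii}$ to be independent of $i$---is precisely the open direction of the paper's Conjecture~\ref{estconj2} ($\var(sc_G)=0 \Rightarrow G$ walk-regular), which the paper explicitly leaves unresolved. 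So even the portions of Conjecture~\ref{estconj1} that survive the counterexample remain conjectural, and your argument supplies no proof of them.
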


\begin{conj}\label{estconj2}~\cite{Est13}\quad
	$\var(sc_G)=0$ $\Leftrightarrow$ $G$ is walk-regular.
\end{conj}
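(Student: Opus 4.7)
The direction $(\Leftarrow)$ is a matter of unpacking definitions: if $G$ is walk-regular then $(A^k)_{ii}$ is independent of $i$ for every $k\geq 0$, so each term of the absolutely convergent series in (\ref{sum}) is independent of $i$, giving $\var(sc_G)=0$.

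For $(\Rightarrow)$ I would pass to the spectral decomposition and then invoke a transcendence input. Let $\mu_1,\dots,\mu_r$ be the distinct eigenvalues of $A$ with orthogonal spectral projectors $E_1,\dots,E_r$; then $A^m=\sum_{k=1}^r \mu_k^m E_k$ for every $m\geq 0$, and $sc_i=(e^A)_{ii}=\sum_{k=1}^r e^{\mu_k}(E_k)_{ii}$. A useful preliminary, which follows from Lagrange interpolation at $\mu_1,\dots,\mu_r$, is that $G$ is walk-regular if and only if each projector $E_k$ has constant diagonal (the $(E_k)_{ii}$ and the $(A^m)_{ii}$ for $0\leq m\leq r-1$ are related by an invertible Vandermonde-type transformation). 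Under this reformulation the hypothesis $\var(sc_G)=0$ supplies, for each pair of vertices $i,j$, the single linear relation
\begin{equation*}
	\sum_{k=1}^r e^{\mu_k}\bigl[(E_k)_{ii}-(E_k)_{jj}\bigr]=0,
\end{equation*}
and the goal becomes to show that each bracket vanishes.

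The arithmetic ingredient I would bring in is that the $\mu_k$ are algebraic, being roots of the integer polynomial $\det(\lambda I-A)$, and each $E_k$ is a polynomial in $A$ with coefficients in $\mathbb{Q}(\mu_1,\dots,\mu_r)$, so all the brackets $(E_k)_{ii}-(E_k)_{jj}$ lie in $\overline{\mathbb{Q}}$. The Lindemann--Weierstrass theorem then says that distinct algebraic numbers $\mu_1,\dots,\mu_r$ produce exponentials $e^{\mu_1},\dots,e^{\mu_r}$ that are linearly independent over $\overline{\mathbb{Q}}$, which forces each bracket to vanish and, via the preliminary observation, gives walk-regularity.

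The step I expect to carry the real weight is the invocation of Lindemann--Weierstrass, and the main obstacle is recognizing that purely linear-algebraic reasoning will not suffice: from a single relation among $r$ unknowns one cannot generally conclude that all of them vanish, and the hypothesis gives information only about $(e^{tA})_{ii}$ at $t=1$ with no obvious way to propagate it to other $t$ absent an arithmetic tool of this strength. A subsidiary verification is that the projector diagonals are actually algebraic, which relies on the integrality of $A$; this fragility suggests that any attempt to generalize the conjecture to graphs with transcendental edge weights would require a substantially different approach.
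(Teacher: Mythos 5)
The statement you are addressing is labelled a \emph{conjecture} in the paper, and the paper proves only the easy half: the $\Leftarrow$ implication is Corollary~\ref{walkobvious}, obtained exactly as in your first paragraph by termwise comparison of the series (\ref{sum}). For the $\Rightarrow$ implication the paper offers no argument at all; it is left open and folded into the ``Modified Estrada Conjecture''~\ref{newdiscconj}. Your spectral-plus-transcendence argument for that direction looks sound to me at every step: the distinct eigenvalues $\mu_1,\dots,\mu_r$ of the integer matrix $A$ are algebraic; each spectral projector $E_k$ is the Lagrange polynomial $\prod_{l\neq k}(A-\mu_l I)/(\mu_k-\mu_l)$ and so has entries in $\mathbb{Q}(\mu_1,\dots,\mu_r)\subseteq\overline{\mathbb{Q}}$; the equivalence ``walk-regular if and only if every $(E_k)_{ii}$ is constant in $i$'' follows from the invertible Vandermonde relation between $\{(E_k)_{ii}\}_k$ and $\{(A^m)_{ii}\}_{m=0}^{r-1}$; and the Lindemann--Weierstrass theorem in Baker's formulation (no nonvanishing $\overline{\mathbb{Q}}$-linear combination of $e^{\alpha_1},\dots,e^{\alpha_r}$ with distinct algebraic exponents equals zero) forces each bracket $(E_k)_{ii}-(E_k)_{jj}$ to vanish. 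If this holds up under scrutiny, you have not reproved a result of the paper but settled its Conjecture~\ref{estconj2} and the first half of Conjecture~\ref{newdiscconj}, which is a genuinely stronger contribution than anything argued in Section~\ref{est-section}. Two limitations are worth recording explicitly: the argument is tied to the weighting $c_k=1/k!$ (more generally to evaluating $e^{tA}$ at a nonzero algebraic $t$) and to adjacency matrices with algebraic entries, as you note; and it offers no leverage on the betweenness and closeness half of Conjecture~\ref{newdiscconj}, where no comparable spectral reformulation of the hypothesis is available.
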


The quantity $\var(c_G)$ denotes the variance of $c_G$, where we have interpreted the vector of centrality values as a set. When $\var(c_G)=0$, all vertices have the same centrality value. 

In this section, we show that Conjecture \ref{estconj1}, which was originally posed in Ref.~\cite{Est05}, is at least partially false by giving a counterexample for betweenness centrality and closeness centrality. 

A graph is called \emph{walk-regular} if, for every $k \in \mathbb{N}$, the number of closed walks in $G$ of length $k$ that start at vertex $i$ is the same for all $i \in \{1, \ldots, n\}$. Because $sc_i$ depends only on the number of \emph{closed walks} (i.e., walks that start and end at the same vertex) of length $k$ starting at $i$, we obtain the following corollary.
\begin{cor}\label{walkobvious}\cite{Est05}\quad
	$G$ is walk-regular $\Rightarrow$ $\var(sc_G)=0$\,.
\end{cor}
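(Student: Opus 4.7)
The plan is to observe that the corollary is essentially immediate from combining the definition of walk-regular with the series formula for subgraph centrality; the only work is to make the correspondence between $(A^k)_{ii}$ and closed walks explicit and then invoke linearity.

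First I would recall the combinatorial identity that $(A^k)_{ii}$ equals the number of walks of length $k$ from vertex $i$ back to vertex $i$, i.e.\ the number of closed walks of length $k$ starting at $i$. This is a standard fact about powers of the adjacency matrix and needs no new argument here. Next, I would substitute this interpretation into the defining series
\begin{equation*}
sc_i = \sum_{k=0}^{\infty} \frac{(A^k)_{ii}}{k!}
\end{equation*}
so that $sc_i$ is exhibited as a non-negative linear combination (with weights $1/k!$) of the closed-walk counts $w_k(i) := (A^k)_{ii}$.

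Then I would apply the walk-regularity hypothesis: by definition, for every $k \in \mathbb{N}$ the value $w_k(i)$ is independent of $i \in \{1,\ldots,n\}$. Consequently each summand $w_k(i)/k!$ is independent of $i$, and termwise the entire series is independent of $i$. Convergence is not an obstacle, since the paper has already noted that the series for $sc_i$ is bounded by $e^{\lambda_1}$ and hence converges absolutely for every vertex. Therefore $sc_i$ takes a common value across all vertices, which gives $\var(sc_G) = 0$.

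There is no real obstacle in this proof: the only conceptual step is recognizing that walk-regularity is precisely the hypothesis that equates the vertex-indexed sequences $\{w_k(i)\}_{k \geq 0}$ across all $i$, and that $sc_i$ is a fixed functional of this sequence. I would make sure to state this dependence explicitly (perhaps by writing $sc_i = F(w_0(i), w_1(i), w_2(i), \ldots)$ with $F$ the same for every $i$) so the reader sees that the corollary is a direct consequence of Equation~(\ref{sum}) and the definition of walk-regular, with no further graph-theoretic input required.
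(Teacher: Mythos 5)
Your proposal is correct and matches the paper's reasoning exactly: the paper justifies the corollary with the single observation that $sc_i$ depends only on the closed-walk counts $(A^k)_{ii}$, which walk-regularity makes independent of $i$. You have simply spelled out that same termwise argument (plus the convergence remark already noted after Equation~(\ref{sum})) in more detail.
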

A graph is walk-regular if and only if $\mathbb{P}_i(Z(k)=i)=\mathbb{P}_j(Z(k)=j)$ for all $k$ and for all vertices $i,j \in V(G)$~\cite{Geo12}, where $\mathbb{P}_i(Z(k)=i)$ denotes the probability that an ordinary random walk starting at $i$ is back at its initial point after $k$ steps. All vertex-transitive graphs are walk-regular, but the converse is not true~\cite{God80}.

A \emph{distance-regular} graph is a $k$-regular graph with diameter $d$ and a set $\{b_0,b_1,\ldots,b_{d-1},c_1,c_2,\ldots,c_d\}$, with $b_0=k$ and $c_1=1$, such that all vertices $i,j \in V(G)$ with distance (i.e.,  shortest path length) $r$ between them satisfy the following property: the number of vertices in $G_{r-1}(i)$ that are adjacent to $j$ is $c_r$, and the number of vertices in $G_{r+1}(i)$ that are adjacent to $j$ is $b_r$~\cite{Brou89,Pau07}. All distance-regular graphs are walk-regular, but a distance-regular graph need not be vertex-transitive~\cite{Cam03}. 

There exist walk-regular graphs that are neither vertex-transitive nor distance-regular~\cite{God80}. We show an example of such a graph in Fig.~\ref{wr-vt-dr}.\footnote{Note that the paper \cite{God80} has a mistake in the plot of this graph. We plot the correct graph in Fig.~\ref{wr-vt-dr}, and one can also find the correct graph in the erratum appended to the paper at \url{http://cs.anu.edu.au/people/bdm/papers/WalkRegular.jpg}.} This graph on 12 vertices, which we call the \emph{Godsil-McKay graph}, gives a counterexample to Conjecture \ref{estconj1}. Its centrality vectors are
\begin{align*}
	sc_{GM} &\approx 6.7035 \times [1 ,\: 1 ,\: 1 ,\: 1 ,\: 1 ,\: 1 ,\: 1 ,\: 1 ,\: 1 ,\: 1 ,\: 1 ,\: 1]\,,\\
	bc_{GM} &= [7 ,\: 8 ,\: 7 ,\: 8 ,\: 7 ,\: 7 ,\: 7 ,\: 7 ,\: 8 ,\: 7 ,\: 8 ,\: 7]\,,\\
	cc_{GM} &= \left[\frac{1}{18} ,\: \frac{1}{19} ,\: \frac{1}{18} ,\: \frac{1}{19} ,\: \frac{1}{18} ,\: \frac{1}{18} ,\: \frac{1}{18} ,\: \frac{1}{18} ,\: \frac{1}{19} ,\: \frac{1}{18} ,\: \frac{1}{19} ,\: \frac{1}{18} \:\right]\,.
\end{align*}
In Fig.~\ref{wr-vt-dr}, we use gray diamonds to designate the vertices with a betweenness centrality of 8 and a closeness centrality of ${1}/{19}$; we use blue disks to designate vertices with a betweenness centrality of 7 and a closeness centrality of ${1}/{18}$. Figure \ref{GGM} shows three more examples of graphs that are walk-regular, but neither distance-regular nor vertex-transitive. Betweenness centrality can distinguish all three of these examples, whereas closeness centrality can distinguish only two of them.

\begin{center}
\begin{figure}[!ht]
\begin{minipage}[c]{\linewidth}
\begin{center}
\includegraphics[scale=.65]{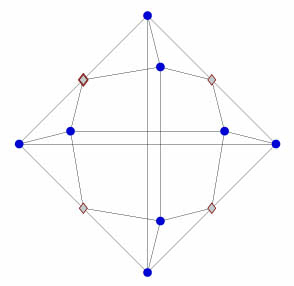}
\end{center}
\end{minipage}
\caption{The Godsil-McKay graph, which is walk-regular but neither vertex-transitive nor distance-regular \cite{God80}.}\label{wr-vt-dr}
\end{figure}
\end{center}

\begin{figure}[!ht]
\begin{center}
\begin{minipage}[c]{0.3\linewidth}
\centering
\includegraphics[scale=.4]{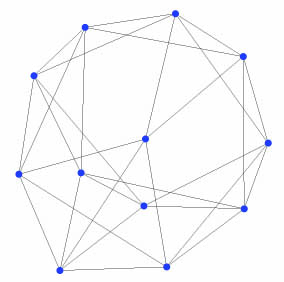}\\(a)
\end{minipage} \hspace*{3pt}
\begin{minipage}[c]{0.3\linewidth}
\centering
\includegraphics[scale=.4]{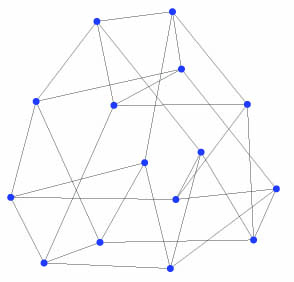}\\(b)
\end{minipage} \hspace*{3pt}
\begin{minipage}[c]{0.3\linewidth}
\centering
\includegraphics[scale=.4]{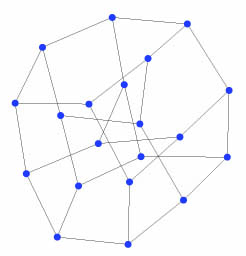}\\(c)
\end{minipage} \hspace*{3pt}
\end{center}
\caption{Three more examples of graphs that are walk-regular, but neither distance-regular nor vertex-transitive: (a) a 5-regular graph on 12 vertices, (b) a 4-regular graph on 15 vertices and (c) a 3-regular graph on 20 vertices.  Betweenness centrality can distinguish (a), (b), and (c). Closeness centrality can distinguish (b) and (c).
}\label{GGM}
\end{figure}

\begin{figure}[!ht]
\begin{center}
\begin{minipage}[c]{0.45\linewidth}
\centering
\includegraphics[scale=.25]{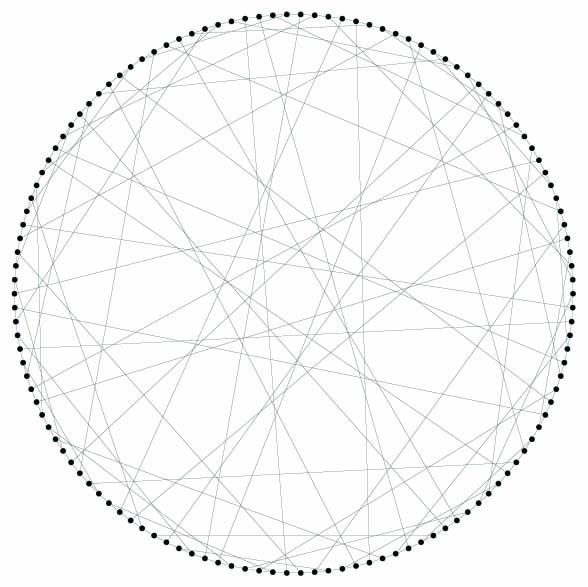}\\(a)
\end{minipage} \hspace*{3pt}
\begin{minipage}[c]{0.45\linewidth}
\centering
\includegraphics[scale=.25]{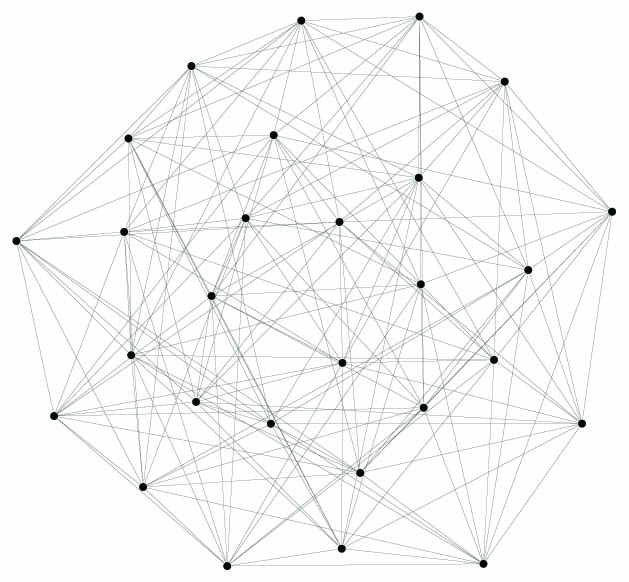}\\(b)
\end{minipage} \hspace*{3pt}
\end{center}
\caption{Two examples of graphs that are distance-regular (and hence also walk-regular) but not vertex-transitive: (a) the Tutte 12-cage \cite{Pegg} and (b) the first Chang graph \cite{Weiss}.}\label{tuttechang}
\end{figure}

In Fig.~\ref{tuttechang}, we give two examples of graphs that are distance-regular (and hence walk-regular) but not vertex-transitive. The examples are the Tutte 12-cage~\cite{Tits59,Ben66,Pegg} and one of the Chang graphs~\cite{Chang59,Weiss}. The vertices of these graphs cannot be distinguished by degree, eigenvector, closeness, betweenness, or subgraph centralities. The Tutte-12 cage and the Chang graphs are the only known distance-regular graphs that are not also vertex-transitive.\footnote{See, e.g., the discussion at \url{http://mathoverflow.net/questions/106589/is-every-distance-regular-graph-vertex-transitive}}. Additionally, it has been proven that betweenness centrality cannot distinguish distance-regular graphs. 
\begin{lemma}\label{distbet}~\cite{Gago12}\quad
	$G$ is distance-regular $\Rightarrow$ $\var(bc_G)=0$\,.
\end{lemma}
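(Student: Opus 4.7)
The plan is to exploit the intersection-number structure of a distance-regular graph and express $bc_v$ as a sum over distance triples whose coefficients do not depend on $v$. As a preliminary step I would establish that, for every $r \le \mathrm{diam}(G)$, the number $p_r$ of distinct shortest paths between two vertices at distance $r$ depends only on $r$. This follows by induction on $r$: every shortest $(j,k)$-path of length $r$ is obtained by appending the edge $j'k$ to a shortest $(j,j')$-path, where $j'$ is a neighbor of $k$ at distance $r-1$ from $j$. By the definition of distance-regularity the number of such $j'$ equals $c_r$, independently of the pair $(j,k)$, so $p_r = c_r\, p_{r-1}$, and hence $p_r = c_1 c_2 \cdots c_r$.

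Next I would rewrite the betweenness centrality of a fixed vertex $v$ by grouping terms according to distances. Since $v$ lies on a shortest $(j,k)$-path exactly when $d(j,v)+d(v,k)=d(j,k)$, and each such path decomposes uniquely into a shortest $(j,v)$-path followed by a shortest $(v,k)$-path, we get $P^*_{jk}(v) = p_{d(j,v)}\, p_{d(v,k)}$ in the contributing case and $0$ otherwise. Setting $a = d(j,v)$, $b = d(v,k)$, and $r = d(j,k)$ yields
\begin{equation*}
bc_v \;=\; \sum_{a+b=r} \frac{p_a\, p_b}{p_r}\, N_v(a,b,r),
\end{equation*}
where $N_v(a,b,r)$ is the number of pairs $(j,k)$ with $j,k\neq v$, $d(v,j)=a$, $d(v,k)=b$, and $d(j,k)=r$.

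The key step is to show that $N_v(a,b,r)$ is independent of $v$. I would invoke the classical fact that in a distance-regular graph the full intersection numbers $p^{h}_{ij} = |\{w : d(u,w)=i,\ d(u',w)=j\}|$, for any $u,u'$ with $d(u,u')=h$, are well-defined constants depending only on $(h,i,j)$. Fixing any $j$ with $d(v,j)=a$, the number of $k$ with $d(v,k)=b$ and $d(j,k)=r$ is then precisely $p^{a}_{b,r}$, independent of the choice of $j$ (and of $v$). Summing over the $k_a = p^{0}_{a,a}$ admissible choices of $j$ yields $N_v(a,b,r) = k_a\, p^{a}_{b,r}$, which does not involve $v$. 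Substituting back, every vertex has the same betweenness, so $\var(bc_G)=0$.

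The principal obstacle is justifying that the full intersection numbers $p^h_{ij}$ are indeed well-defined and determined by the parameters $\{b_r,c_r\}$ appearing in the definition of distance-regularity given in the paper. This is a standard theorem in algebraic graph theory, proved by a double induction on $(h,\,i+j)$ using the recurrences induced by the $b_r$ and $c_r$ (see, e.g., Brouwer--Cohen--Neumaier), and I would cite it rather than reprove it. Once this is granted, the remainder of the argument is straightforward combinatorial bookkeeping.
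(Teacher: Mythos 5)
Your proof is correct, but note that the paper does not actually prove Lemma~\ref{distbet} at all---it is quoted from Ref.~\cite{Gago12}---so there is no in-paper argument to compare against; the closest analogue is the paper's proof of the companion Lemma~\ref{distclos} for closeness. That proof needs only the shell sizes $\vert\Gamma^r(i)\vert = b_{r-1}\cdots b_0/(c_r\cdots c_1)$, whereas your betweenness argument genuinely requires more: the recursion $p_r = c_r p_{r-1}$ for shortest-path counts (which is right---the penultimate vertex of a geodesic determines the decomposition uniquely, and there are exactly $c_r$ choices for it), the factorization $P^*_{jk}(v) = p_{d(j,v)}p_{d(v,k)}$ on the contributing pairs (also right, since a walk of length $d(j,k)$ between $j$ and $k$ cannot repeat a vertex), and the well-definedness of the full intersection numbers $p^h_{ij}$, which is the one nontrivial external input. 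You correctly identify that last point as the crux and propose to cite it; that is the standard and appropriate move, since deriving all $p^h_{ij}$ from the $b_r, c_r$ alone is a classical but non-immediate theorem. One minor bookkeeping point worth making explicit: the sum over $(a,b,r)$ with $a+b=r$ should be restricted to $a,b\ge 1$ so that $j,k\ne v$, and the pairs with $d(j,v)+d(v,k)>d(j,k)$ contribute zero, which your case split already handles. With that, every coefficient and every count in your expression for $bc_v$ depends only on the intersection array, so the conclusion $\var(bc_G)=0$ follows; your write-up supplies a self-contained proof of a statement the paper only cites.
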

In fact, this is also true for closeness centrality.
	\begin{lemma}\label{distclos}$G$ is distance-regular $\Rightarrow$ $\var(cc_G)=0$\,.
\end{lemma}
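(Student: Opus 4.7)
The plan is to show that the sum $\sum_{j \in V(G)} P(i,j)$ appearing in the denominator of (\ref{close}) is independent of the choice of vertex $i$, from which $\var(cc_G)=0$ is immediate. Rewriting the sum by grouping vertices according to their distance from $i$, one has
\begin{equation*}
	\sum_{j \in V(G)} P(i,j) = \sum_{r=0}^{d} r \, |G_r(i)|\,,
\end{equation*}
where $G_r(i)$ denotes the set of vertices at distance exactly $r$ from $i$. So it suffices to verify that $|G_r(i)|$ depends only on $r$ and not on $i$ when $G$ is distance-regular.

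This is a standard consequence of the definition of the intersection array. I would argue it by induction on $r$. The base cases $|G_0(i)|=1$ and $|G_1(i)|=b_0=k$ hold by definition. For the inductive step, I would count edges between $G_r(i)$ and $G_{r+1}(i)$ in two ways: each vertex in $G_r(i)$ has exactly $b_r$ neighbors in $G_{r+1}(i)$ by the definition of the intersection array, and each vertex in $G_{r+1}(i)$ has exactly $c_{r+1}$ neighbors in $G_r(i)$. This yields the recursion
\begin{equation*}
	|G_{r+1}(i)| \;=\; \frac{b_r}{c_{r+1}}\, |G_r(i)|\,,
\end{equation*}
which shows inductively that $|G_r(i)|=:k_r$ is determined entirely by the intersection array and is therefore the same for every $i$.

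Combining the two steps, $\sum_{j} P(i,j) = \sum_{r=0}^{d} r\, k_r$ is a constant depending only on the intersection array of $G$, so $cc_i$ takes the same value for every vertex $i$ and $\var(cc_G)=0$. I do not expect any real obstacle here: the entire argument rests on the elementary double-counting that produces the recursion for $k_r$, which is already implicit in the definition of distance-regularity given earlier in the paper, and the rest is a one-line rewriting of the closeness-centrality formula.
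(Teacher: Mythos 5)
Your argument is correct and is essentially identical to the paper's proof: both group the vertices by distance from $i$ and use the double count $b_r\,|G_r(i)| = c_{r+1}\,|G_{r+1}(i)|$ (the paper phrases this via the biregular bipartite subgraph between consecutive distance classes) to conclude that $|G_r(i)|$ depends only on the intersection array. No gaps.
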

\begin{proof} Let $\Gamma^r(i)=\{ j \vert \delta(i,j)=r \}$, where $r \in \{1, \ldots, d\}$ and $\delta(i,j)$ is the distance between vertex $i$ and $j$. This is a generalization of the neighborhood $\Gamma(i)=\{ j \vert ij \in E(G) \}$. Because we are only considering connected graphs, every $j \in V(G) \setminus i$ is associated with a finite positive integer $\delta(i,j)$. Therefore, the sets $\Gamma^r(i)$ and $\Gamma^r(j)$ are pairwise disjoint for any $i$ and $j$ (with $i \neq j$), and $V(G) = i \cup \Gamma^1(i) \cup \ldots \cup \Gamma^d(i)$. Given node $i$, we consider the subgraph $B_r$ of $G$ that consists of the union of $\Gamma^r(i)$ and $\Gamma^{r+1}(i)$ and all edges between the two sets.  We also recall that a \emph{biregular graph} is a bipartite graph in which all vertices in the same partite set have the same degree. Let $j$ be an arbitrary vertex in $\Gamma^{r}(i)$ and let $k$ be an arbitrary vertex in $\Gamma^{r+1}(i)$. From the definition of a distance-regular graph, we know that
\begin{align}
		\vert \Gamma (j) \cap \Gamma^{r+1} (i) \vert &= b_r\,, \notag  \\ 
		\vert \Gamma (k) \cap \Gamma^r (i) \vert &= c_{r+1}\,.
\end{align} 
Therefore, $B_r$ is a biregular graph.  Additionally, $\vert E(B_r) \vert = b_r \vert \Gamma^{r} (i) \vert = c_{r+1} \vert \Gamma^{r+1} (i) \vert$ for all $B_r$, so
\begin{equation}
	\vert \Gamma^r (i) \vert = \frac{b_{r-1}b_{r-2}\ldots b_0}{c_r c_{r-1} \ldots c_1}
\end{equation}
for each $i$. Because 
\begin{equation}
	cc_i=\frac{n-1}{\sum_{j \in V(G)} P(i,j)}=\frac{n-1}{\sum_{r=1}^d r \vert \Gamma^r (i)\vert }
\end{equation}
depends only on $\vert \Gamma^r(i) \vert$ (where $r \in \{1, \ldots, d\}$), which is independent of the node $i$, it follows that $\var(cc_G)=0$ for distance-regular graphs. 
\end{proof}

The converse of Lemma~\ref{distclos} is not true (see the counterexample in Fig.~\ref{closfail}).

\section{Conclusion}

Based on the results in Section \ref{est-section}, we propose the following conjecture, which we call the ``Modified Estrada Conjecture". 
\begin{conj}\label{newdiscconj}
$\var(sc_G)=0$ $\Leftrightarrow$ $G$ is walk-regular, and \\
	$\var(sc_G)=\var(bc_G)=\var(cc_G)=0$ $\Leftrightarrow$ $G$ is distance-regular.
\end{conj}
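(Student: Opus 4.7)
The plan is to establish the four implications comprising Conjecture \ref{newdiscconj}. Two are immediate from results already in this paper: if $G$ is walk-regular then $\var(sc_G)=0$ by Corollary \ref{walkobvious}, and if $G$ is distance-regular then $G$ is also walk-regular (so $\var(sc_G)=0$) and moreover $\var(bc_G)=\var(cc_G)=0$ by Lemmas \ref{distbet} and \ref{distclos}. The substantive work lies in the two converses.

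For the converse $\var(sc_G)=0 \Rightarrow G$ walk-regular, I would argue spectrally. Writing
\[
  sc_i \;=\; \sum_{\lambda} (P_\lambda)_{ii}\, e^{\lambda}\,,
\]
where the sum runs over the distinct eigenvalues of $A$ and $P_\lambda$ is the orthogonal projection onto the $\lambda$-eigenspace, note that, since $A$ has integer entries, every eigenvalue $\lambda$ is an algebraic integer and every entry of $P_\lambda$ lies in $\overline{\mathbb{Q}}$. If $sc_i=sc_j$, then $\sum_\lambda \bigl[(P_\lambda)_{ii}-(P_\lambda)_{jj}\bigr] e^{\lambda}=0$ is an $\overline{\mathbb{Q}}$-linear relation among the exponentials $e^{\lambda}$ attached to distinct algebraic exponents. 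The Lindemann--Weierstrass theorem forbids any such nontrivial relation, forcing $(P_\lambda)_{ii}=(P_\lambda)_{jj}$ for every $\lambda$. Consequently $(A^k)_{ii}=\sum_\lambda (P_\lambda)_{ii}\lambda^{k}$ is independent of $i$ for every $k \in \mathbb{N}$, which is exactly walk-regularity.

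For the converse $\var(sc_G)=\var(bc_G)=\var(cc_G)=0 \Rightarrow G$ distance-regular, the previous paragraph allows us to assume that $G$ is walk-regular, and hence regular of some degree $k$. The goal is to recover both defining features of distance-regularity: that $|\Gamma^{r}(i)|$ is independent of $i$ for each $r$, and that the intersection numbers $b_r,c_r$ are well-defined. I would proceed in two steps. First, combine constant closeness, which gives a single linear constraint on the distance profile $(|\Gamma^{1}(i)|,\dots,|\Gamma^{d}(i)|)$, with the closed-walk counts supplied by walk-regularity, hoping that expansions of $(A^k)_{ii}$ in terms of paths between shells can be reorganized to show that the whole profile is independent of $i$. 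Second, with a uniform profile in place, use constant betweenness to argue that the number of shortest paths between a pair of vertices depends only on their distance, and then extract the intersection numbers by counting how shortest paths cross between the $r$-th and $(r\pm 1)$-th shells around a fixed vertex.

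The second converse is where I expect the real difficulty. The walk counts $(A^k)_{ii}$ furnished by walk-regularity encode only closed-walk information, whereas distance-regularity is a geodesic statement, and there is no obvious algebraic identity bridging the two. The examples in Figures \ref{wr-vt-dr} and \ref{GGM} confirm that both $cc_G$ and $bc_G$ can individually fail to be constant on walk-regular graphs, so each does carry information beyond the spectrum; but whether the two together rigidly enforce a distance-regular intersection array seems to demand either a delicate global argument or a case analysis over the feasible intersection arrays consistent with a walk-regular spectrum. Ruling out any walk-regular, non-distance-regular graph that nonetheless has constant $cc_G$ and $bc_G$ is the essential obstruction.
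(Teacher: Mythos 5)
First, note that the paper does not actually prove this statement: it is posed as a conjecture, and the only justification given is the closing remark that the $\Leftarrow$ directions follow from Corollary \ref{walkobvious} and Lemmas \ref{distbet} and \ref{distclos} together with the fact that distance-regular graphs are walk-regular. Your treatment of the $\Leftarrow$ directions is identical to that remark, so there is nothing to compare there.

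Your Lindemann--Weierstrass argument for the implication $\var(sc_G)=0 \Rightarrow G$ walk-regular is, as far as I can see, correct and goes strictly beyond the paper: the distinct eigenvalues of $A$ are distinct algebraic numbers, each spectral projector $P_\lambda$ is a polynomial in $A$ with coefficients in the splitting field of the characteristic polynomial and hence has algebraic entries, so the relation $\sum_\lambda \bigl[(P_\lambda)_{ii}-(P_\lambda)_{jj}\bigr]e^{\lambda}=0$ forces every coefficient to vanish, and then $(A^k)_{ii}=\sum_\lambda (P_\lambda)_{ii}\lambda^k$ is independent of $i$ for all $k$. This would settle the first equivalence of the conjecture (and, with Corollary \ref{walkobvious}, Conjecture \ref{estconj2} as well), which is a genuine contribution the paper does not make.

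The gap is in the second equivalence. For $\var(sc_G)=\var(bc_G)=\var(cc_G)=0 \Rightarrow G$ distance-regular you offer a plan rather than a proof, and the plan contains no mechanism that could close the argument. Constant closeness yields only the single linear constraint $\sum_{r=1}^d r\,\vert\Gamma^r(i)\vert = \text{const}$ on the $d$ quantities $\vert\Gamma^r(i)\vert$, which for $d\ge 2$ is far from pinning down the distance profile; walk-regularity constrains closed-walk counts, which, as you yourself observe, carry no direct geodesic information; and the step extracting well-defined intersection numbers $b_r, c_r$ from constant betweenness is entirely unargued. You candidly identify the essential obstruction---excluding a walk-regular, non-distance-regular graph with constant $bc_G$ and $cc_G$---but identifying an obstruction is not overcoming it. As it stands, the proposal proves the first equivalence and the $\Leftarrow$ direction of the second, but the $\Rightarrow$ direction of the second equivalence remains exactly as open as it is in the paper.
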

Note that the $\Leftarrow$ parts of the statements in Conjecture \ref{newdiscconj} are known to be true because of Corollary \ref{walkobvious} and Lemmas \ref{distbet} and \ref{distclos} (and the fact that all distance-regular graphs are walk-regular).

\section*{Acknowledgements}

This work was funded by the James S. McDonnell Foundation (\#220020177).  We thank James Fowler for helpful discussions and Brendan McKay for sending us the examples of walk-regular, non-distance-regular, non-vertex transitive graphs in Fig.~\ref{GGM}.  We thank Ernesto Estrada for sending us a copy of Ref.~\cite{Est13} prior to its publication.

\bibliography{refs5}        
\bibliographystyle{plain}

\end{document}